\setlist[enumerate]{leftmargin=2em,itemindent=0em, labelindent=0pt,labelwidth=1.5em,labelsep=.5em, align=left, noitemsep}
\newlist{txtenum}{enumerate}{1}
\setlist[txtenum]{leftmargin=0em,itemindent=1.5em, labelindent=0pt,labelwidth=1em,labelsep=.5em, align=left}
\theoremstyle{plain}
\newtheorem{theorem}{Theorem}
\newtheorem*{theorem*}{Theorem}
\newtheorem{proposition}[theorem]{Proposition}
\newtheorem*{proposition*}{Proposition}
\newtheorem*{corollary*}{Corollary}
\newtheorem{lemma}[theorem]{Lemma}
\newtheorem*{lemma*}{Lemma}
\newtheorem*{observation*}{Observation}
\newtheorem*{conjecture*}{Conjecture}
\newtheorem*{question*}{Question}
\newtheorem*{questions*}{Questions}
\newtheorem*{problem*}{Problem}
\newtheorem*{problems*}{Problems}
\newtheorem*{openproblem*}{Open Problem}
\theoremstyle{definition}
\newtheorem*{definition*}{Definition}
\newtheorem{example}[theorem]{Example}
\newtheorem*{example*}{Example}
\newtheorem*{exercise*}{Exercise}
\newtheorem{remark}[theorem]{Remark}
\newtheorem*{remark*}{Remark}
\newtheorem*{remarks*}{Remarks}
\theoremstyle{remark}
\newtheorem*{claim*}{Claim}
\newcommand{\subclass}[1]{}
\newcommand{\enumTi}[1]{\renewcommand{\theenumi}{#1}}
\newcommand{\alphenumi}{\enumTi{\alph{enumi}}}
\newcommand{\romenumi}{\enumTi{\roman{enumi}}}
\newlength{\hspaceforlengthglumpf}
\newcommand{\comment}[1]{\text{\footnotesize[#1]}}
\DeclareMathOperator{\tr}{tr}
\newcommand{\One}{\mathbf{1}}
\newcommand{\lt}{\left}
\newcommand{\rt}{\right}
\newcommand{\abs}[1]{{\lt\lvert{#1}\rt\rvert}}
\newcommand{\nfrac}[2]{{\nicefrac{#1}{#2}}}
\newcommand{\NN}{\mathbb{N}}
\newcommand{\RR}{\mathbb{R}}
\newcommand{\ZZ}{\mathbb{Z}}
\newcommand{\bra}[1]{{\lt< #1 \rt|}}
\newcommand{\ket}[1]{{\lt| #1 \rt>}}
\newlength{\algotabbingwidth}
\renewcommand{\paragraph}[1]{\medskip\noindent{\textsl{#1.}}}
\newcommand{\mypar}{\par\medskip\noindent}
\begin{document}
\title{Calculus on Parameterized Quantum Circuits\thanks{Supported by the Estonian Research Council, ETAG (\textit{Eesti Teadusagentuur}), through PUT Exploratory Grant \#620.}}
\author{Javier Gil Vidal$^a$ and Dirk Oliver Theis$^{a,b}$
\\[1ex]
  \small $^a$ Institute of Computer Science, University of Tartu, Estonia\\
  \small $^b$ Ketita Labs {\tiny O\"U}, Tartu, Estonia\\
  \small \texttt{javier@ut.ee}, \texttt{dotheis@}\{\texttt{ketita.com}, \texttt{ut.ee}\}%
}
\date{Mon Dec 31 13:48:10 UTC 2018 [Compiled: \currenttime]}
\maketitle

\begin{abstract}
  Mitarai, Negoro, Kitagawa, and Fujii proposed a type of parameterized quantum circuits, for which they gave a way to estimate derivatives wrt the parameters.  Their method of estimating derivatives uses only changes in the values of the parameters, and no other changes to the circuit; in particular no ancillas or controlled operations.  Recently, Schuld et al.\ have extended the results, but they need to revert to ancillas and controlled operations for some cases.

  In this short paper, we extend the types of MiNKiF circuits for which derivatives can be computed without ancillas or controlled operations --- at the cost of a larger number of evaluation points.  We also propose a ``training'' (i.e., optimizing the parameters) which takes advantage of our approach.
  \par\medskip%
  \textbf{Keywords:} Near-term quantum computing; parameterized quantum circuits, quantum neural networks.
\end{abstract}

\section{Introduction}\label{sec:intro}
In near-term quantum computing, there is considerable interest in so-called \textit{Parameterized Quantum Circuits (PQCs)}, where quantum operations are dependent on parameters which are iteratively modified to ``train'' the quantum circuit to compute a desired function.  One such approach is \textit{Quantum Circuit Learning} by Mitarai et al., \cite{Mitarai-Negoro-Kitagawa-Fujii:q-circ-learn:2018}, where real number parameters $\theta_1,\dots,\theta_m$ determine the duration for evolving the system determined by a generalized Pauli Hamiltonian, or, in simpler words, running the unitary operation $e^{-i\theta_j P_j/2}$.  The whole quantum computation then estimates the function
\begin{equation}\label{eq:qcirc-Fn}
  F\colon \RR^m \to \RR\colon \theta \mapsto \tr(M U(\theta) \rho U(\theta)^\dagger),
\end{equation}
with $U(\theta) = e^{-i\theta_m H_m} V_{m-1} \dots e^{-i\theta_2 H_2} V_1 e^{-i\theta_1 H_1}$ for Hamiltonians $H_1,\dots,H_m$ and arbitrary (fixed) unitaries $V_1,\dots,V_{m-1}$.  We refer to PQCs of this form as \textit{MiNKiF} PQCs.  In the original paper of~\cite{Mitarai-Negoro-Kitagawa-Fujii:q-circ-learn:2018}, the Hamiltonians are generalized Pauli operators (up to normalization, in our notation), but that is clearly not essential to the approach.

Mitarai et al.'s approach has proven popular due to the ostensible ease with which partial deriveatives $d/d\theta_j F(\theta)$ are available.  Ignoring the reality of near-term quantum computing devices and some amount of mathematics, one could say that evaluating a partial derivative is only twice as costly as evaluating the function~$F$ itself.  This is the result of the following observation~\cite{Mitarai-Negoro-Kitagawa-Fujii:q-circ-learn:2018}:
\begin{equation}
  \frac{d}{d\theta_j} F(\theta) = \tfrac12\lt(  F(\theta + \pi e_j /2) - F(\theta - \pi e_j /2)  \rt),
\end{equation}
where $e_j$ is the $m$-vector with a~$1$ in position~$j$ and $0$ everywhere else.
In other words, it suffices to evaluate \textbf{the same quantum circuit} --- only modified parameters --- a couple of times to evaluate the partial derivative.

Recently, Schuld et al.~\cite{Schuld-Bergholm-Gogolin-Izaac-Killoran:ana-grad:2018} have given a generalization of that result to more general Hamiltonians than generalized Paulis.  They extend Mitarai et al.'s method so that the generalized Pauli Hamiltonians can be replaced with any~$H_j$ which have two distinct eigenvalues.  Moreover, for~$H_j$'s with more than two eigenvalues, Schuld et al.\ propose a method for evaluating the gradient which need an ancilla qubit, and require to apply $e^{-itH_j}$ and unitaries derived from it controlled on that ancilla.

\mypar%
Using an ancilla comes with several drawbacks on near-term quantum computers.  The first comes from the limited qubit connectivity of near-term quantum processors: One would like to lay out the MiNKiF PQC on the quantum processor in a way which makes maximum use of the hardware capabilities, but the circuit computing the derivative requires the ancilla qubit to be transported to the locations where hardware-native controlled operations are required, which gives not only a significant overhead, but also a significant difference to the original layout.  Secondly, and perhaps more importantly, the quantum noise present in the derivative-circuit will be considerably higher than for the original circuit, and it will also be different in a way which is difficult to control for.  In other words, due to quantum noise, the derivative circuit will yield expectation values which may not be related much with the derivative not of the original PQC function.

\paragraph{Our results}
In this note, we show how the derivatives of the quantum circuit can be computed without ancilla qubits \textit{if} the eigenvalues of the Hamiltonians~$H_j$ are (known and) spaced nicely: For each~$H_j$ there exists an $\alpha$ such that all pairwise differences of eigenvalues of~$H_j/\alpha$ are integral.

We also point out how our technique can be further exploited in a coordinate descent ``training'' algorithm.

\section{Eigenvalue distances and Fourier transform}
Without loss of generality, we restrict our attention to a single parameter:
For a fixed hermitian operator (Hamiltonian) $H$ let $U(t) = e^{-i t H}$; fix a hermitian operator (observable)~$M$ and positive a hermitian operator with trace one (initial quantum state)~$\rho$.
Consider the expectation value
\begin{equation}
  f_H\colon \RR \to \RR\colon t \mapsto \tr(M U(t) \rho U(t)^\dagger).
\end{equation}

We will denote by $\lambda_j$, $j=1,\dots,n$, the eigenvalues, listed with multiplicities, of~$H$, and by $\ket{j}$, $j=1,\dots,n$, a corresponding orthonormal eigenbasis.

The assumption on the eigenvalues sketched in the introduction means that there exist integers $k_1,\dots,k_n$ with
\begin{equation}\label{eq:scale-eigvals}
  \lambda_j = \alpha k_j,
\end{equation}
so that the hermitian operator $K := H/\alpha$ has integral eigenvalues $k_1,\dots,k_n$.  Let
\begin{gather}
  V(t) := e^{-i t K}\\
  \intertext{and}
  g\colon \RR \to \RR\colon t \mapsto \tr(M V(t) \rho V(t)^\dagger);
\end{gather}
Note that
\begin{equation}\label{eq:f-in-terms-of-g}
  \begin{aligned}
    g(t) &= f(t/\alpha) \\
    f'(t) &= \alpha g'(\alpha t).
  \end{aligned}
\end{equation}
so that~$g(t)$ can be evaluated by evaluating $f(t/\alpha)$, and the derivative of~$g$ gives the derivative of~$f$.  In other words, without loss of generality, we may assume that the eigenvalues $\lambda_j$ are integral.  We do so, and drop the use of~$g$ from now on.

\begin{example}
  Consider the microwave-controlled transmon gate for superconducting architectures from \cite{Chow-Corcoles-Gambetta-Rigetti-etal:entangle-microwave:2011} (cf. \cite{Schuld-Bergholm-Gogolin-Izaac-Killoran:ana-grad:2018}):
  \begin{equation*}
    H := \sigma_x\otimes \One - b\sigma_z\otimes\sigma_x + c\One\otimes\sigma_x.
  \end{equation*}
  The eigenvalues are $\pm c \pm \sqrt{b^2+1}$.  Our technique applies if $c$ and $\sqrt{b^2+1}$ are collinear over the rational numbers.
\end{example}

We start with some easy lemmas.

\begin{lemma}
  The function~$f$ is $2\pi$-periodic.
\end{lemma}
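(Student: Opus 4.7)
The plan is to expand $f$ explicitly as a finite trigonometric polynomial whose frequencies are the pairwise differences of eigenvalues of $H$. Since, by the reduction preceding the lemma, we may assume these eigenvalues $k_1,\dots,k_n$ are integers, every frequency appearing in the expansion is an integer, and $2\pi$-periodicity is then immediate.

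\textbf{Key steps.} First, I would diagonalize: write $U(t) = \sum_{j} e^{-it k_j}\ket{j}\bra{j}$ using the orthonormal eigenbasis introduced before the lemma. Substituting into the definition of $f$ and denoting matrix entries $M_{j\ell} := \bra{j} M \ket{\ell}$ and $\rho_{j\ell} := \bra{j}\rho\ket{\ell}$, a routine manipulation of the trace gives
\begin{equation*}
  f(t) \;=\; \sum_{j,\ell=1}^n M_{\ell j}\,\rho_{j\ell}\, e^{-i t (k_j - k_\ell)}.
\end{equation*}
Second, I would observe that each difference $k_j-k_\ell$ is an integer, so each exponential $e^{-it(k_j-k_\ell)}$ satisfies $e^{-i(t+2\pi)(k_j-k_\ell)} = e^{-it(k_j-k_\ell)}$. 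Summing these equal terms over $j,\ell$ yields $f(t+2\pi) = f(t)$ for all $t \in \RR$, which is the claim.

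\textbf{Expected obstacle.} There is essentially no obstacle: the proof is a one-line consequence of the spectral decomposition together with the integrality assumption~\eqref{eq:scale-eigvals}. The only mild care is bookkeeping in the trace expansion to make sure the frequencies come out as differences $k_j - k_\ell$ (and not sums), which is what makes the integrality of the $k_j$'s, rather than any stronger rationality condition, sufficient. This same Fourier-series representation will, I expect, be reused in the sequel to recover $f$ (and its derivative) from finitely many evaluations, which is presumably why the lemma is isolated here.
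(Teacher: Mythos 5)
Your proof is correct and rests on the same two ingredients as the paper's own argument: the spectral decomposition $V(t)=\sum_{j} e^{-itk_j}\ket{j}\bra{j}$ and the integrality of the $k_j$. The paper is slightly more economical --- it observes directly that $V(t+2\pi)=V(t)$ at the operator level and concludes $f(t+2\pi)=f(t)$ without expanding the trace --- but the fuller Fourier expansion you write down is exactly the content of the next lemma (on the support of $\hat f$), so your extra work is not wasted.
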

\begin{proof}
  This follows by simply noting that, for all~$t$,
  \begin{align*}
    V(t+2\pi)
    &= \sum_{j=1}^n e^{-i(t+2\pi)k_j} \ket{j}\bra{j}                             \\
    &= \sum_{j=1}^n e^{-i t k_j} \ket{j}\bra{j}       && \comment{$k_j \in \ZZ$} \\
    &= V(t).
  \end{align*}
\end{proof}

\newcommand{\fouint}{\tfrac{1}{\sqrt{2\pi}}\int_0^{2\pi}e^{-ikt}}
Denote by
\begin{equation*}
  \hat f\colon k \mapsto \fouint f(t) \,dt
\end{equation*}
the Fourier transform of~$f$.

\begin{lemma}\label{lem:supp-of-hatf}
  The Fourier transform $\hat f$ of~$f$ is supported on $D := \{ k_i-k_j \mid i,j=1,\dots,n \}$.
\end{lemma}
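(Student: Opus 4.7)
My plan is to reduce the claim to an elementary orthogonality computation by writing $f(t)$ explicitly in the eigenbasis of $K$. First I would use the spectral decomposition $V(t) = \sum_j e^{-itk_j}\ket{j}\bra{j}$ to expand
\begin{equation*}
  V(t)\rho V(t)^\dagger \;=\; \sum_{j,l=1}^n e^{-it(k_j-k_l)}\,\rho_{jl}\,\ket{j}\bra{l},
\end{equation*}
where $\rho_{jl}:=\bra{j}\rho\ket{l}$. Multiplying by $M$ and taking the trace gives the finite trigonometric polynomial
\begin{equation*}
  f(t) \;=\; \sum_{j,l=1}^n \rho_{jl}\,M_{lj}\,e^{-it(k_j-k_l)} \;=\; \sum_{d\in D} a_d\, e^{-itd},
\end{equation*}
with $M_{lj}:=\bra{l}M\ket{j}$ and $a_d := \sum_{(j,l):\,k_j - k_l = d} \rho_{jl}M_{lj}$. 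In particular, the frequencies appearing in~$f$ already live in the integer set~$D$.

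Plugging this expansion into the integral defining~$\hat f$ and using orthogonality of characters on $[0,2\pi]$, one has $\int_0^{2\pi}e^{-i(k+d)t}\,dt = 2\pi\,\delta_{k,-d}$ whenever $k+d\in\ZZ$. All terms thus vanish unless $-k\in D$, and because $D$ is invariant under negation (swap $i\leftrightarrow j$ in its defining formula), this is the same as $k\in D$. This gives the support statement.

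The argument is a one-line spectral computation, so there is no real technical obstacle; the only subtlety worth flagging is a notational one. The displayed formula defines $\hat f$ as a function on all of $\RR$, but for non-integer $k$ the integrand $e^{-i(k+d)t}$ does \emph{not} vanish under integration, so ``supported on $D$'' is naturally read as the statement that the Fourier-series coefficients of the $2\pi$-periodic function $f$, i.e.\ $\hat f$ restricted to integer $k$, are supported on $D$. I would state this convention at the outset of the proof and then carry out the two-line calculation above.
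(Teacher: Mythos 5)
Your proof is correct and follows essentially the same route as the paper's: spectral decomposition of $V(t)$ in the eigenbasis of $K$, followed by orthogonality of $e^{ikt}$ on $[0,2\pi]$; you merely reorganize the computation by first writing $f$ as an explicit trigonometric polynomial with frequencies in $D$ before integrating. Your side remark that $\hat f(k)$ should be read as a Fourier-series coefficient (integer $k$) is a fair clarification of a convention the paper leaves implicit, but it does not change the argument.
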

\begin{proof}
  Let's do it!
  \begin{align*}
    \hat f(k)
    &= \fouint \tr(MV(t)\rho V(-t)) \,dt                 \\
    &= \tr\lt( M \cdot \fouint V(t)\rho V(-t) \,dt \rt)  &&  \\
    &= \sum_{l,j=1}^n \tr\lt( M \cdot \fouint V(t) \ket{l}\bra{l} \rho V(-t) \ket{j}\bra{j} \,dt \rt) && \\
    &= \sum_{l,j=1}^n \tr\lt( M \cdot \fouint e^{-i t k_l} \ket{l}\bra{l} \rho e^{i t k_j} \ket{j}\bra{j} \,dt \rt) && \\
    &= \sum_{l,j=1}^n \fouint e^{-i t k_l} e^{i t k_j} \,dt \tr\lt( M \cdot \ket{l}\bra{l} \rho \ket{j}\bra{j} \rt). && \\
  \end{align*}
  Now
  \begin{align*}
    \fouint e^{-i t k_l} e^{i t k_j} \,dt
    & = \frac{1}{\sqrt{2\pi}}\int_0^{2\pi}e^{it( (k_j-k_l) - k)} \\
    &= \begin{cases}
      \sqrt{2\pi},& \text{if $k_j-k_l = k$}\\
      0,&           \text{otherwise.}
    \end{cases}
  \end{align*}
  Hence, $\hat f(k)$ is zero, unless $k\in D$.
\end{proof}

The Fourier inversion theorem now gives the Fourier expansion of~$f$ and~$f'$:
\begin{align*}
  f(t) &= \tfrac{1}{\sqrt{2\pi}} \sum_{k\in D} e^{itk}\hat f(k), \\
  \intertext{and}
  f'(t) &= \tfrac{1}{\sqrt{2\pi}} \sum_{k\in D\setminus\{0\}} ik e^{itk}\hat f(k).
\end{align*}

We immediately derive the result.

\begin{proposition}\label{prop:diamD+1}
  The derivative function $f'$ can be determined by evaluating~$f$ in $S := 2(\lambda_{\max} - \lambda_{\min})/\alpha +1$ arbitrary distinct points in $\lt[0,2\pi/\alpha\rt[$.
\end{proposition}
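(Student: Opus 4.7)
The plan is to exploit the reduction already carried out in the paper to assume $\alpha = 1$ (so the $k_j$ are integers and the interval becomes $[0,2\pi[$), and then to show that the Fourier coefficients $\hat f(k)$ can be recovered exactly from $S$ suitably many distinct samples of $f$ by inverting a Vandermonde-type linear system on the unit circle. Once these coefficients are in hand, the second Fourier-series display immediately gives $f'(t)$ at any desired $t$.

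Concretely, set $d := k_{\max} - k_{\min}$, so $S = 2d + 1$. By Lemma~\ref{lem:supp-of-hatf}, the support $D$ of $\hat f$ satisfies $D \subseteq \{-d, -d+1, \ldots, d\}$, which has exactly $S$ elements. I would therefore view $f$ as a trigonometric polynomial
\[
  f(t) = \tfrac{1}{\sqrt{2\pi}} \sum_{k=-d}^{d} e^{itk}\, \hat f(k),
\]
where I simply set $\hat f(k) := 0$ for $k \in \{-d,\ldots,d\}\setminus D$. Evaluating at $S$ arbitrary distinct points $t_1,\ldots,t_S \in [0,2\pi[$ yields the square linear system
\[
  f(t_l) \;=\; \tfrac{1}{\sqrt{2\pi}} \sum_{k=-d}^{d} e^{i t_l k}\, \hat f(k) \qquad (l = 1, \ldots, S),
\]
whose coefficient matrix is $A = (e^{i t_l k})_{l,k}$.

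The key step, and the only real obstacle, is showing that $A$ is invertible. I would factor out $e^{-i t_l d}$ from row~$l$: up to the nonzero scalar $\prod_l e^{-i t_l d}$, the determinant of $A$ equals the determinant of the classical Vandermonde matrix with nodes $z_l := e^{i t_l}$ and column exponents $0,1,\ldots,2d$. Since the $t_l$ are pairwise distinct points of $[0, 2\pi[$, the $z_l$ are pairwise distinct points of the unit circle, so the Vandermonde determinant $\prod_{l<l'}(z_{l'}-z_l)$ is nonzero and hence $A$ is invertible.

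Consequently the vector $(\hat f(k))_{k=-d}^{d}$ is uniquely determined by the samples $f(t_1),\ldots,f(t_S)$, and substituting the recovered coefficients into
\[
  f'(t) = \tfrac{1}{\sqrt{2\pi}} \sum_{k \in D \setminus\{0\}} i k\, e^{itk}\, \hat f(k)
\]
gives $f'$ at every~$t$. Back-scaling via \eqref{eq:f-in-terms-of-g} delivers the statement in the original units $\alpha, 2\pi/\alpha$ asserted in the proposition.
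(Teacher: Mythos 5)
Your proof is correct and follows essentially the same route as the paper's: evaluate $f$ at $S=2n+1$ distinct points, invert the $(e^{it_lk})$ system to recover the Fourier coefficients supported on $\{-n,\dots,n\}\supseteq D$, and read off $f'$ from the differentiated Fourier series. In fact you supply a detail the paper leaves implicit, namely the justification that the coefficient matrix is non-singular via factoring each row down to a classical Vandermonde matrix with distinct nodes $e^{it_l}$ on the unit circle.
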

\begin{proof}
  Let $n := \max_{d\in D} \abs{d}$, and note that $S = 2n+1$.
  Take arbitrary distinct points $t_1,\dots,t_S\in \lt[0,2\pi\rt[$, and evaluate $f(t_s)$, for $s=1,\dots,S$.

  Then solve the non-singular system of linear equations
  \begin{equation}\label{eq:fourier-matrix}
    \begin{pmatrix} f(t_1)\\ \\ \vdots \\ \\ \\ f(t_S) \end{pmatrix}
    =
    \begin{pmatrix}
      e^{-i t_1 n} & \dots & e^{-i t_1} & 1 & e^{i t_1} & \dots & e^{i t_1 n} \\
                   &       &            &   &           &       &             \\
                   &       &            &   &           &       &             \\
      \vdots       &       &            &   &           &       &  \vdots     \\
                   &       &            &   &           &       &             \\
                   &       &            &   &           &       &             \\
      e^{-i t_S n} & \dots & e^{-i t_S} & 1 & e^{i t_S} & \dots & e^{i t_S n} \\
    \end{pmatrix}
    \begin{pmatrix}
      \beta_{-n} \\ \\ \vdots \\ \beta_0 \\ \vdots \\ \\ \beta_n
    \end{pmatrix}
  \end{equation}
  for $\beta_{\centerdot}$.
  We have $\hat f(k) = \beta_k$ for all $k=-n,\dots,n$ and we can evaluate the derivative of~$f$ in~$t$ as
  \begin{equation*}
    f'(t) = \sum_{k=-n}^n k\beta_k e^{ikt}.
  \end{equation*}
\end{proof}

\subsection{Improvement for 2 and 3 eigenvalues}
For two eigenvalues, the technique in Prop.~\ref{prop:diamD+1} requires one more evaluation than the one in \cite{Mitarai-Negoro-Kitagawa-Fujii:q-circ-learn:2018,Schuld-Bergholm-Gogolin-Izaac-Killoran:ana-grad:2018}.  We now explain how the two approaches are related, and reduce the number of evaluations of~$f$ for the case of 3 equally spaced eigenvalues to~4.

As~$f$ is a real valued function, let us rewrite the Fourier expansion as a trigonometric polynomial: There are real numbers $\alpha$, $\beta_k$, $\gamma_k$, $k\in\NN:=\{1,2,3,\dots\}$, such that
\begin{equation}\label{eq:trig-poly}
  f(t)
  = \alpha
  + \sum_{k=1}^n \beta_k \sin(kt)
  + \sum_{k=1}^n \gamma_k\cos(kt).
\end{equation}

\paragraph{Two eigenvalues}
We now cast the method of \cite{Mitarai-Negoro-Kitagawa-Fujii:q-circ-learn:2018} in our context.

For two eigenvalues, we only have the terms with $k=1$ in the expansion~\eqref{eq:trig-poly}:
\begin{equation}\label{eq:lin-trig}
  f = \alpha + \beta\sin + \gamma\cos.
\end{equation}
The method of \cite{Mitarai-Negoro-Kitagawa-Fujii:q-circ-learn:2018} requires evaluating~$f$ at the points $\pi/2$ and $-\pi/2$, which gives us a non-singular system of two equations with two unknowns, $\alpha$, $\beta$.

\paragraph{Three evenly-spaced eigenvalues}
Suppose that there are three distinct eigenvalues, but with only two non-trivial differences: $D=\{0,\pm 1,\pm 2,\}$ (after normalization).
In this case we have the terms with $k=1,2$ in the expansion~\eqref{eq:trig-poly}:
\begin{equation*}
  f(t) = \alpha + \beta_1\sin(t) + \beta_2\sin(2t) + \gamma_1\cos(t) + \gamma_2\cos(2t).
\end{equation*}
Evaluating~$f$ in the four points $\pm \pi/4$, $\pm 3\pi/4$ gives us a non-singular system of 4 equations with 4 unknowns $\alpha$, $\beta_1$, $\beta_2$, $\gamma_1$, thus allowing to determine $\beta_1,2$ and hence $f'(0)$ with four evaluations of~$f$.

\paragraph{More eigenvalue differences}
It appears that this method cannot be extened to more than $2$ non-trivial differences, since is no selection of points to evaluate~$f$ in which would eliminate a variable from the system of equations.

\subsection{Sparse difference sets}
It may happen that $\abs{D} < 2\max_{d\in D}\abs{d} +1$.  In that case, the proposition above is not optimal in that it requires too many evaluations of~$f$.  We now give the ``right'' result.

\begin{theorem}\label{thm:|D|-evals}
  The Fourier transform of~$f$ can be obtained from $S := \abs{D}$ evaluations of~$f$, for example in the following ways:
  \begin{enumerate}[label=(\alph*)]
  \item\label{thm:|D|-evals:vandermonde} Fix an integer~$a$ and evaluate~$f$ in the points $t_j := 2\pi(a+j)/S$, $j=0,\dots,S-1$; or
  \item\label{thm:|D|-evals:random} Evaluate~$f$ in $\abs{D}$ points chosen independently uniformly at random from $[0,2\pi]$.
  \end{enumerate}
\end{theorem}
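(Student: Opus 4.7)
The plan is to reduce both parts to the invertibility of a single $S\times S$ matrix. By Lemma~\ref{lem:supp-of-hatf}, the Fourier series of $f$ is
\[
  f(t) = \tfrac{1}{\sqrt{2\pi}} \sum_{k\in D} e^{itk}\,\hat f(k),
\]
so evaluating at sample points $t_0,\dots,t_{S-1}$ yields the linear system $\vec v = A\,\vec{\hat f}$, where the matrix $A$ has entries $A_{jk} = \tfrac{1}{\sqrt{2\pi}}\,e^{it_j k}$ with rows indexed by $j$ and columns by $k\in D$. If $A$ is nonsingular, inversion recovers $\hat f$ on its full support~$D$, which is what both parts claim.

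For part~\ref{thm:|D|-evals:vandermonde}, I would substitute $t_j = 2\pi(a+j)/S$ and observe that $A_{jk} = (\sqrt{2\pi})^{-1}\,z_k^{a}\, z_k^{j}$ with $z_k := e^{2\pi i k/S}$. Factoring out the diagonal matrix $\mathrm{diag}(z_k^a/\sqrt{2\pi})_{k\in D}$ on the right reduces the claim to the invertibility of the Vandermonde matrix $(z_k^{j})_{j,k}$ with nodes $z_k$, $k\in D$. Its determinant is the classical product $\prod_{k<k'}(z_{k'}-z_k)$, which is nonzero exactly when the nodes $z_k$ are pairwise distinct.

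For part~\ref{thm:|D|-evals:random}, I would view $\det A$ as a real-analytic function of $(t_0,\dots,t_{S-1})\in\RR^S$; its entries are trigonometric polynomials in each $t_j$. The exponentials $\{e^{ikt}\}_{k\in D}$ are linearly independent over~$\CC$ because they carry pairwise distinct frequencies. An inductive selection argument---adding one sample point at a time, each chosen outside the finite zero set of a nonzero univariate trigonometric polynomial---then produces sample points at which $A$ is nonsingular, so $\det A\not\equiv 0$ globally. The vanishing locus of a nonzero real-analytic function on $[0,2\pi]^S$ has $S$-dimensional Lebesgue measure zero, and the almost-sure claim follows under the uniform product measure.

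The main obstacle I anticipate is confirming the node-distinctness needed for part~\ref{thm:|D|-evals:vandermonde}: namely, that the elements of~$D$ are pairwise incongruent modulo~$S$, which amounts to injectivity of the canonical map $D\hookrightarrow\ZZ/S\ZZ$. This is not an automatic consequence of $S=|D|$ alone, so the argument will have to invoke the structural description of $D$ as the symmetric difference set $\{k_i-k_j\}$ of integer eigenvalues (possibly under an implicit normalization such as $\gcd$ of the nonzero elements of~$D$ being one). Part~\ref{thm:|D|-evals:random}, by contrast, is essentially routine once the Fourier-support lemma is in place.
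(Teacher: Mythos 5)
Your approach coincides with the paper's: part~\ref{thm:|D|-evals:vandermonde} via a Vandermonde matrix with nodes $z_k=e^{2\pi i k/S}$, and part~\ref{thm:|D|-evals:random} via the fact that the zero set of a not-identically-vanishing real-analytic determinant has Lebesgue measure zero (the paper's Lemma~\ref{lem:zeros-of-analytic-functions}). The one genuinely different choice you make is in part~\ref{thm:|D|-evals:random}: you establish $\det A\not\equiv 0$ by an inductive point-selection argument resting on the linear independence of $\{e^{ikt}\}_{k\in D}$ as functions on $\RR$, whereas the paper deduces non-vanishing of the determinant from part~\ref{thm:|D|-evals:vandermonde}. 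Your version is preferable, for the reason below.

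The ``obstacle'' you anticipate in part~\ref{thm:|D|-evals:vandermonde} is a genuine gap --- and it is present in the paper's own proof, which simply asserts ``since the $z_k$, $k\in D$, are distinct'' without justification. Distinctness of the $z_k$ is exactly injectivity of the reduction $D\to\ZZ/S\ZZ$, and this can fail: take integral eigenvalues $\{0,1,4\}$, so that $D=\{0,\pm1,\pm3,\pm4\}$ and $S=7$; then $-4\equiv 3$ and $-3\equiv 4\pmod 7$, the matrix has repeated columns, and the system in~\ref{thm:|D|-evals:vandermonde} is singular for every choice of~$a$. So part~\ref{thm:|D|-evals:vandermonde} as stated requires an additional hypothesis (e.g.\ that $D$ is the full interval $\{-n,\dots,n\}$, or more generally that $D$ injects into $\ZZ/S\ZZ$); neither you nor the paper closes this, but you correctly diagnosed it rather than asserting it away. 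Note that part~\ref{thm:|D|-evals:random} survives intact under your argument: exponentials $e^{ikt}$ with distinct integer frequencies $k\in D$ are linearly independent on $\RR$ even when those frequencies collide modulo~$S$, so your induction gives $\det A\not\equiv 0$ without any appeal to~\ref{thm:|D|-evals:vandermonde}; the paper's proof of~\ref{thm:|D|-evals:random}, which argues by contradiction with Item~\ref{thm:|D|-evals:vandermonde}, inherits the gap.
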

\begin{proof}
  \textit{\ref{thm:|D|-evals:vandermonde}.}  For $k\in D$, let $z_k := e^{2\pi i k/S}$.  Since the $z_k$, $k\in D$, are distinct, the Vandermonde matrix defined by
  \begin{align*}
    A_{j,k}  &:= (z_k)^j                 &&\text{$k\in D$, $j=0,\dots,S-1$} \\
    \intertext{is invertible, and hence, the system of linear equations}
     f(t_j) &= e^{2\pi i a/S} \sum_{k\in D} z_k^j x_k &&\text{$k\in D$, $j=0,\dots,S-1$}
  \end{align*}
  has a unique solution~$x$.  Clearly, $x_k = \hat f(k)$.

  \textit{\ref{thm:|D|-evals:random}.}  Let $D = \{k_1,\dots, k_S\}$, with $S:=\abs{D}$.  For $t\in\RR^S$, consider the $(S\times S)$-matrix $F$ defined by $F(t)_{\ell,j} := e^{i k_\ell t_j}$.  Letting $y_j := f(t_j)$, by Lemma~\ref{lem:supp-of-hatf}, the system of linear equations
  \begin{equation*}
    y = F(t) x
  \end{equation*}
  has $x_\ell = \hat f(k_\ell)$ as a solution.  We want to show that it is the only solution; in other words, that the matrix~$F(t)$ is non-singular if~$t$ is chosen uniformly at random.

  For that, consider the real analytic function $h\colon \RR^S \to \RR\colon t\mapsto \det(F(t))$, and assume that there were a set~$Z\subseteq[0,2\pi]^S$ of non-zero $S$-dimensional Lebesgue measure such that $h(Z)=0$.  Using standard techniques (see Lemma~\ref{lem:zeros-of-analytic-functions} in Appendix~\ref{apx:analytic-sets}), it can be shown that then~$h$ were identically zero on $[0,2\pi]^S$ --- which would contradict Item~\ref{thm:|D|-evals:vandermonde}.
\end{proof}

\begin{remark}
  We note that, in the simplest cases where the eigenvalues of~$H$ are $\pm \nfrac12$ (after normalizing), it is possible to obtain the coefficients in~\eqref{eq:lin-trig} directly by evaluating expectation values of circuits, using $e^{-itH} = \cos t \One -i sin t H$, and applying the tricks in \cite{Mitarai-Negoro-Kitagawa-Fujii:q-circ-learn:2018,Schuld-Bergholm-Gogolin-Izaac-Killoran:ana-grad:2018}.
\end{remark}

We refer to \cite{GilVidal:PhD:2020} for the application of Theorem~\ref{thm:|D|-evals} to the usual parameterized 2-qubit gates.

\section{Exploiting the Fourier expansion algorithmically}
It is reasonable to assume that an algorithm for ``training'' a PQC would evaluate the expectation value for the current set of parameters in every iteration of the ``training'' algorithm.  In our setting, that would imply that $f(0)$ is always known \textsl{for free,} as it were.
Even if that is not the case, spending an additional evaluation of~$f$ to obtain the whole Fourier expansion of~$f'$ might pay off in the training.

The effect is most easily explained in the case where the goal of the ``training'' is to find the ground state of the Hamiltonian~$M$ in~\eqref{eq:qcirc-Fn} through some kind of Variational Quantum Eigensolver~\cite{Peruzzo-Mcclean-Shadbolt-Yung-Zhou-Love-AspuruGuzik-Obrien:VQE:2014}.\footnote{We note, though, that the advantage persists when the quantum circuit function~$F$ \eqref{eq:qcirc-Fn} takes inputs depending on other parameters and its result is used as input to other mechanism, and the whole system has to be trained (as would be the case, e.g., in a heterogeneous quantum-classical neural network).}

Here is a sketch of a coordinate-descent algorithm for minimizing~$F$.  (No attempt has been made to make it sophisticated.)  For the sake of simplicity, we gloss over details of normalization for the eigenvalues, and assume that the eigenvalue distances of all Hamiltonians are integral.

\begin{algorithm}[H]
  \caption{Coordinate Descent for Quantum Circuits}
  \SetKw{Input}{Input}\SetKw{Output}{Returns}%
  \Input{\\%
    \quad$\bullet$ Oracle access to evaluating~$F$ given $\theta\in\RR^m$\\
    \quad$\bullet$ Starting point $\theta\in\RR^m$\\
    \quad$\bullet$ Sets $D_j$, $j=1,\dots,m$, of eigenvalue distances of Hamiltonian~$j$\\
  }%
  \Output{A point $\theta\in\RR^m$ where~$F$ attains a local minimum}\\%
  \BlankLine
  \Repeat{No improvement is made}{%
    \For{$j=1,\dots,m$}{%
      Evaluate $f(t) := F(\theta + t e_j)$ for $\abs{D_j}$ choices for~$t$ (cf. Theorem~\ref{thm:|D|-evals})\\
      Compute the Fourier transform $\hat f(k)$, $k\in D_j$ (cf. Theorem~\ref{thm:|D|-evals})
      and find the $\alpha$, $\beta_k,\gamma_k$, $k\in D_j\cap\NN$ \\
      Find the minimum of the trigonomeric polynomial~\eqref{eq:trig-poly}      \label{algstep:min}\\
      Denote the minimum by~$t_0$, and update $\theta = \theta + t_0 e_j$.
    }
  }
  \Return{$\theta$}
\end{algorithm}

The difficulty lies, obviously in Step~\ref{algstep:min}.

\paragraph{Case of two eigenvalues}
In the case of two eigenvalues (i.e., $D = \{0,\pm1\}$ after normalization) Step~\ref{algstep:min} is trivial:  Using~\eqref{eq:lin-trig}, the minimum is one of $t_1:=\arctan(\beta/\gamma) \in \lt]-\nfrac\pi2,\nfrac\pi2\rt]$ (with $\arctan(\infty) := +\nfrac\pi/2$) or $t_2:=t_1+\pi$, depending on the signs of $\beta,\gamma$: If $\gamma\ne 0$, then $t_1$ is the maximum iff $\gamma >0$; if $\gamma=0$, then $t_1$ is the maximum iff $\beta > 0$.

\paragraph{More eigenvalues}
Now assume that there are more eigenvalues; we retain the condition that the eigenvalue distances are a integral and their greatest common divisor is~1.  Then solving Step~\ref{algstep:min} appears, at first sight, daunting.  The following is, however, well-known \cite{boyd-vandenberghe:book:2004}.

\begin{theorem}
  Let $f(t)=\sum_{k=-n}^n \zeta_k e^{-ikt}$ with $\zeta_k=\zeta_{-k}^*$ and $\zeta_0=0$.  Then the minimum of~$f(t)$ over~$t \in [0,2\pi]$ is equal to the maximum  of $-\tr(F)$ where~$F$ ranges over all $n\times n$ complex hermitian positive matrices which satisfy, for $k=1,\dots,n$,
  \begin{equation*}
    \sum_{j=k+1}^n F_{j,j-k}  = \zeta_k.
  \end{equation*}
  Moreover, the minimum can be determined by solving a semidefinit programming problem.
\end{theorem}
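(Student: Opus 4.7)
The plan is to rewrite the minimization as a (infinite-dimensional) linear program and then apply the classical Fejér--Riesz / Gram-matrix characterization of non-negative trigonometric polynomials.

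First, because $f$ is continuous and $[0,2\pi]$ is compact, one has
\begin{equation*}
  \min_{t\in[0,2\pi]} f(t)
  \;=\;
  \sup\bigl\{\,\gamma\in\RR \;\bigl|\; f(t)-\gamma \ge 0 \text{ for all } t\in[0,2\pi]\,\bigr\},
\end{equation*}
and the supremum is attained.  The Hermitian symmetry of the $\zeta_k$ implies that $f$ is real-valued, so the polynomial $p_\gamma(t) := f(t)-\gamma$ is again a real-valued trigonometric polynomial of degree at most $n$ whose Fourier coefficients agree with those of $f$ except that the $0$-th coefficient is $-\gamma$.

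Second, I would invoke the spectral-factorization theorem (Fejér--Riesz, in its Gram-matrix / SDP form, e.g.\ as presented in \cite{boyd-vandenberghe:book:2004}): a real trigonometric polynomial $p(t)=\sum_{k=-n}^{n}p_k e^{-ikt}$ with $p_{-k}=p_k^*$ is non-negative on $\RR$ if and only if there exists a Hermitian positive semidefinite matrix $F$ (of size $n\times n$ with the indexing used in the theorem statement) whose $k$-th ``shifted-diagonal'' sums realize the Fourier coefficients of $p$, i.e.\ $\sum_{j} F_{j,j-k}=p_k$ for every $k$.  The easy direction of this equivalence comes from writing a rank-one PSD $F=q q^\dagger$ and checking that its diagonal sums are exactly the Fourier coefficients of $|q(e^{it})|^2$; the harder direction decomposes a general PSD $F$ into rank-ones, whose associated trigonometric polynomials add up to~$p$.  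Applying this equivalence to $p_\gamma$, the constraint $p_\gamma\ge 0$ is equivalent to the existence of a Hermitian PSD $F$ satisfying $\sum_{j=k+1}^n F_{j,j-k}=\zeta_k$ for $k=1,\dots,n$ together with $\tr(F)=-\gamma$ (the $k=0$ diagonal constraint).

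Third, I would substitute back: maximizing $\gamma$ subject to the existence of such an $F$ is the same as maximizing $-\tr(F)$ over all PSD Hermitian $F$ satisfying the shifted-diagonal constraints, which is the stated SDP.  Feasibility is immediate (take any $\gamma$ well below $\min f$, producing a strictly positive $p_\gamma$ whose Gram representation is strictly feasible), so strong duality and attainment of the optimum are inherited from standard SDP theory.

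The main obstacle, and the content of the proof, is the non-trivial direction of the Fejér--Riesz / Gram equivalence: every non-negative trigonometric polynomial of degree $n$ admits a PSD Gram representation of the appropriate size.  One standard route is to factor $p_\gamma(e^{it})$ as $|q(e^{it})|^2$ by pairing roots of the associated Laurent polynomial on the unit circle (in pairs $z,1/\bar z$) and choosing one root from each pair; the Toeplitz Gram matrix of the resulting polynomial $q$ is then the desired PSD witness.  Everything else (the LP reformulation, the reading off of the constraints, the identification $\zeta_0-\gamma = -\tr(F)$, and the SDP solvability) is bookkeeping.
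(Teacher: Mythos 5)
The paper does not actually prove this theorem: it is stated as ``well-known'' with a pointer to \cite{boyd-vandenberghe:book:2004}, so there is no in-paper argument to compare against. Your plan is the standard argument that underlies that citation, and it is essentially correct: rewrite $\min_t f(t)$ as $\max\{\gamma : f-\gamma \ge 0\}$, characterize non-negativity of the trigonometric polynomial $f-\gamma$ by the existence of a Hermitian PSD Gram matrix whose shifted-diagonal sums reproduce the Fourier coefficients (Fej\'er--Riesz in its SDP form), read off the $k=0$ constraint as $\tr(F) = -\gamma$ using $\zeta_0=0$, and conclude that maximizing $\gamma$ is maximizing $-\tr(F)$. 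You correctly identify the non-trivial direction (every non-negative trigonometric polynomial admits such a Gram representation, via spectral factorization pairing roots $z$ and $1/\bar z$) as the real content.

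One point you should not gloss over: the Gram matrix of a degree-$n$ trigonometric polynomial is naturally $(n+1)\times(n+1)$, since the Fej\'er--Riesz factor $q$ has $n+1$ coefficients. With an $n\times n$ matrix indexed $1,\dots,n$ as in the statement, the constraint for $k=n$ is an empty sum and forces $\zeta_n=0$, which does not hold in general; this looks like an off-by-one in the theorem as printed. Your parenthetical ``of size $n\times n$ with the indexing used in the theorem statement'' inherits the problem rather than resolving it --- in a full write-up you should either index the matrix from $0$ to $n$ or state explicitly that the correct size is $(n+1)\times(n+1)$. With that repair, the rest of your outline (strict feasibility for $\gamma$ well below $\min f$, hence attainment, hence the SDP formulation) goes through.
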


As semidefinite programming can be performed to any given accuracy in polynomial time --- and can be performed efficiently in practice, too --- this gives a way to solve Step~\ref{algstep:min} for any reasonably sized~$n$ to reasonabe precision.

\section{Summary and outlook}
We have demonstrated that in well-behaved cases, evaluating derivatives of expectation values of MiNKiF PQCs can be done without ancilla qubits.
The given number of evaluations prescribed by our technique for evaluating the derivative in a single point (except in the cases of 2 eigenvalues and 3 evenly spaced eigenvalues) already gives the whole Fourier expansion of the function.  We have argued that this can be used to derive coordinate-descent- based optimization algorithms which, in every step, jump to the minimum of the function in the given coordinate direction.

This work raises some obvious questions.

\paragraph{Randomness}
In this paper, we have made the assumption that the expectation value $F(\theta)$ can be ``evaluated''.  This is mathematically inaccurate: It can only be estimated as the average as the result of several runs of the quantum circuit with (identical parameter setting), and the required number of samples depends on the variance.  In view of Theorem~\ref{thm:|D|-evals}, it is an obvious question whether the accuracy of determining the Fourier amplitudes can be increased by varying the parameters between the samples.

\paragraph{Sign vs.\ Fourier expansion}
The currently proposed algorithms for ``training'' PQCs rely only on the \textsl{signs} of the derivatives.  These can be usually be determined more easily than the actual expectation values: E.g., in the case when the observable takes only two values, estimating the sign of the derivative amounts to a marjority vote --- which has favourable probabilitistic properties compared to estimating the actual expectation values separately.  Whether the advantage of using the Fourier expansion outweighs the cost of computing it must be tested.

\paragraph{Implementation and computational experiments}
To understand the usefulness of the techniques in practice, the approach has to coded and experimental results have to be performed.

\paragraph{Quantum noise}
Running quantum circuits on NISQ~\cite{Preskill:nisq:2018} processors suffers from limitations such as the low gate fidelities and decoherence, which throw off the expectation values, and make executions of exactly specified gates difficult.  In all the questions above, the effect of quantum noise must be considered.

\paragraph{Understanding the Fourier expansion}
The full $m$-variable Fourier expansion of~$F$ is a trigonometric polynomial with, in general, exponentially many terms (even in the number of qubits), even for 2-eigenvalue Hamiltonians.

Can the $m$-variable Fourier expansion of~$F$ be related to the design of design of the PQC in such a way that, when a family of functions~$F$ which are likely (useful and) difficult to handle on classical computers is identified based on the Fourier expansions, a corresponding family of PQCs can be designed which can compute these functions?  Can this a allow another demonstration of quantum advantage with shallow circuits (cf.\ \cite{Bravyi-Gosset-Koenig:qadvantage:2018})?

\subsection*{Conclusion}
The methods of this paper extend the set of cases where simple evaluations of the parameterized quantum circuits suffice to understand functions it computes.  The number of evaluations grows with the number if differences between eigenvalues -- which can obviously be quadratic in the number of eigenvalues.  In comparison, the method in~\cite{Schuld-Bergholm-Gogolin-Izaac-Killoran:ana-grad:2018} requires only~8 evaluations to determine the gradient, but the circuit which is run is more complex than and significantly different from and the original one.  Moreover, our method allows to determine the local dependence on the coordinate completely, which opens the door to new types of ``training'' algorithms.  Which method should be preferred will depend on the in use of the parameterized quantum circuit.

\section*{Acknowledgements}
This research was supported by the Estonian Research Council, ETAG (\textit{Eesti Teadusagentuur}), through PUT Exploratory Grant \#620.

\bibliographystyle{plain}
\bibliography{dirks}
\appendix
\section*{APPENDIX}
\section{Completion of the proof of  Theorem~\ref{thm:|D|-evals}\ref{thm:|D|-evals:random}}\label{apx:analytic-sets}
\begin{lemma}\label{lem:zeros-of-analytic-functions}
  Suppose that $h\colon[0,2\pi]^S \to \RR$ is a real analytic function which vanishes on a set of positive Lebesgue measure.  Then~$h$ vanishes identically on $[0,2\pi]^S$.
\end{lemma}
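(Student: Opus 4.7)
The plan is a standard induction on the dimension $S$, combining the one-variable identity theorem with a Fubini-type slicing argument.

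\emph{Base case} ($S=1$). If $h\colon [0,2\pi]\to\RR$ is real analytic and not identically zero, then by the classical identity theorem for real-analytic functions of one real variable, the zero set of $h$ has no accumulation point in $[0,2\pi]$; in particular it is at most countable and so has Lebesgue measure zero. Contrapositively, if the zero set has positive measure, $h\equiv 0$.

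\emph{Inductive step.} Assume the lemma holds in dimension $S-1$, and let $h\colon [0,2\pi]^S\to\RR$ be real analytic with zero set $Z$ of positive $S$-dimensional Lebesgue measure. For each $s\in[0,2\pi]$, let $Z_s := \{u\in[0,2\pi]^{S-1} \mid h(u,s)=0\}$. By Fubini--Tonelli applied to the indicator function of $Z$,
\begin{equation*}
  0 < \int_{[0,2\pi]} \mu_{S-1}(Z_s)\,ds,
\end{equation*}
so the set $A := \{s\in[0,2\pi] \mid \mu_{S-1}(Z_s) > 0\}$ has positive 1-dimensional Lebesgue measure. For each fixed $s\in A$, the map $u\mapsto h(u,s)$ is real analytic on $[0,2\pi]^{S-1}$ and vanishes on a set of positive $(S-1)$-dimensional measure, so by the induction hypothesis $h(\cdot,s)\equiv 0$.

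Now fix an arbitrary $u\in[0,2\pi]^{S-1}$. The function $s\mapsto h(u,s)$ is real analytic in one real variable, and by the previous paragraph it vanishes on all of~$A$, a set of positive measure. By the base case, $h(u,\cdot)\equiv 0$. Since $u$ was arbitrary, $h\equiv 0$ on $[0,2\pi]^S$, completing the induction.

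\textbf{Where the work sits.} The only non-elementary ingredient is the one-variable identity theorem invoked in the base case (and re-invoked in the inductive step when passing from the slicewise vanishing to full vanishing). Everything else is bookkeeping with Fubini. No obstacle of substance is expected; one just has to be careful that ``real analytic on $[0,2\pi]^S$'' is understood as analytic on an open neighborhood of the closed cube, which is the case for the determinant $h(t) = \det(F(t))$ appearing in the proof of Theorem~\ref{thm:|D|-evals}\ref{thm:|D|-evals:random}, since it is an entire function of $t\in\RR^S$.
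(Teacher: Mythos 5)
Your proof is correct and follows essentially the same route as the paper's: Fubini-type slicing of the zero set plus the one-variable identity theorem (what the paper calls the Principle of Permanence), organized as an induction on the dimension. The only difference is that you carry out the inductive step explicitly (applying the induction hypothesis to the $(S-1)$-dimensional slices and then running a one-variable argument in the last coordinate), whereas the paper sketches the first slicing step and defers the remaining induction to \cite{GilVidal:PhD:2020}.
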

There are many proofs of this fact; we sketch an elementary for the sake of completeness.
\begin{proof}
  Denote by $Z\subseteq[0,2\pi]^S$ a set of positive Lebesgue measure on which~$h$ vanishes.

  Denoting $t' := (t_2,\dots,t_S)$, since
  \begin{equation*}
    0 < \int 1_Z(t) \,dt = \int\int 1_Z(t) \,dt_1\,dt',
  \end{equation*}
  there exists a set $Z'$ of $(S-1)$-dimensional Lebesgue measure such that for all $t'\in Z'$ we have $\int 1_Z(t)\,dt > 0$.  Fix $t'\in Z'$.  There is a set $Z_1(t')$ of non-zero one-dimensional Lebesgue measure such that the analytic function $h(\cdot,t')\colon t_1\mapsto h(t)$ vanishes identically on $Z_1(t')$, and the Principle of Permanence implies that~$h(\cdot,t')$ vanishes identically on~$[0,2\pi]$.  Dealing inductively with $t_2,\dots,t_S$ (see \cite{GilVidal:PhD:2020} for the details), we find that $h$ is identically zero on $[0,2\pi]^S$.
\end{proof}

\end{document}